\newcolumntype{A}{<{\raggedright\arraybackslash}X}
\newcolumntype{B}{>{\raggedleft\arraybackslash}{\hsize=.3\hsize}X}
\newcolumntype{C}{>{\hsize=.24\hsize}X}
\newcommand{\blambda}{\boldsymbol{\Lambda}}
\newcommand{\bA}{\boldsymbol{A}}
\newcommand{\bL}{\boldsymbol{\mathcal L}}
\newcommand{\cL}{\mathcal L}
\newcommand{\R}{\mathcal R}
\newcommand{\bT}{\mathsf{T}}
\newcommand{\bE}{\mathbb{E}}
\newcommand{\bmu}{\boldsymbol \mu}
\newcommand{\bpsi}{\boldsymbol \psi}
\newcommand{\bzeta}{\boldsymbol \zeta}
\newcommand{\btheta}{\boldsymbol \theta}
\theoremstyle{plain}
\newtheorem{Thm}{Theorem}
\newtheorem{Asm}{Assumption}
\theoremstyle{remark}
\newcommand{\qedsymb}{\hfill\ensuremath{\blacksquare}}                 
\title{Detection of Malicious Agents in Social Learning} 
\author{\IEEEauthorblockN{Valentina Shumovskaia, Mert Kayaalp, and Ali H. Sayed}\\
$\newline$
\IEEEauthorblockA{
École Polytechnique Fédérale de Lausanne (EPFL)
}
\thanks{Emails: $\{$valentina.shumovskaia, mert.kayaalp, ali.sayed$\}$@epfl.ch.
}
}
\begin{document}

\maketitle

\begin{abstract}
    \color{black}{Non-Bayesian social learning} is a framework for distributed hypothesis testing aimed at learning the true state of the environment. 
    Traditionally, the agents are assumed to receive observations conditioned on the same true state, although it is also possible to examine the case of heterogeneous models across the graph. 
    One important special case is when heterogeneity is caused by the presence of malicious agents whose goal is to move the agents toward a wrong hypothesis.
    In this work, we propose an algorithm that allows discovering the true state of every individual agent based on the \textit{sequence} of their beliefs. In so doing, the methodology is also able to locate malicious behavior.
\end{abstract}

\begin{IEEEkeywords}
Social learning, hypothesis testing, inverse modeling, diffusion strategy, adaptive learning, anomaly detection, malicious agent.
\end{IEEEkeywords}

\section{Introduction and Related Work}
    {\color{black}Non-Bayesian} social learning algorithms~\cite{jadbabaie2012non, zhao2012learning, salami2017social, nedic2017fast, molavi2017foundations, molavi2018theory, bordignon2020adaptive, bordignon2022partial, lalitha2018social, inan2022social, 9132712, 9670665} solve the distributed hypothesis problem in a {\color{black}\emph{locally} Bayesian fashion}. 
    These algorithms learn the underlying true state of nature by observing streaming data arriving at the agents and  conditioned on that state. 
    The key difference with Bayesian solutions~\cite{gale2003bayesian, acemoglu2011bayesian, hkazla2021bayesian} is that {\color{black}non-Bayesian} social learning does not require each node to know the full graph topology or likelihood models used by every other node. 
    These features enable fully decentralized implementations. 
    {\color{black}Social learning frameworks can be applied in many contexts, including in sensor network detection~\cite{rabbat2004decentralized, rabbat2005robust}, distributed machine learning~\cite{hu2023non, bordignon2022partial}, and the modeling of user opinions on social graphs~\cite{shumovskaia2023discovering}.}
    
    Under social learning, agents update their beliefs (or confidences) on each possible hypothesis, ensuring that the total confidence adds up to $1$. 
    At every time instant, each agent receives an observation conditioned on the state of the environment and uses its local likelihood models to perform a local Bayesian update starting from its current belief vector.
    {\color{black}This step is followed by a communication stage where agents exchange and fuse beliefs with neighbors.
    These steps are repeated until convergence.} 
    
    Many existing works on social learning assume that the observations received by each agent arise from \textit{one} true state of the environment. 
    {\color{black}Others study nonhomogeneous models, such as~\cite{shumovskaia2023community}, which focuses on community networks where each community has its own truth.}
    {\color{black}The main conclusion is that if the malicious agents are sparsely located in the network, it often becomes impossible to track such agents based just on their belief. Also, additional defense strategies against malicious agents can be implemented~\cite{mitra2019new, su2019defending}.}
    
    In this work, we develop a centralised algorithm for identifying the true state associated with each agent, even when the final belief of an agent may be pointing toward another conclusion due to the interactions over the graph. In this way, the method is able to identify malicious agents as well. 
    There is no question that this is an important issue that deserves attention~\cite{zhang2019measuring, 9229115, krishnamurthy2012afriat, 8013830, illiano2015detecting, pang2018towards, zhang2016misinformation, smith2021automatic, egele2013compa, tomaiuolo2020survey, fornacciari2018holistic, sadiq2021aggression, 8846206, 9023355}. 
    For instance, over social networks, it is critical to identify  users that have unwarranted intentions and aim to force the network to reach erroneous conclusions~\cite{zhang2016misinformation, smith2021automatic, egele2013compa}, as well as to discover trolls~\cite{tomaiuolo2020survey, fornacciari2018holistic, sadiq2021aggression} and measure their impact on performance~\cite{zhang2019measuring}. 
    The same techniques can be used to locate malfunctioning agents~\cite{krishnamurthy2012afriat}.

    {\color{black}There are other works that deal with similar objectives, albeit under different assumptions and considerations. For example, the works~\cite{vempaty2013distributed, chen2008robust} address Byzantine agent detection but assume a collection of  i.i.d. data conditioned on each agent's true state. In comparison, our approach collects correlated shared beliefs from inter-agent communication. Other methods leverage temporal and spatial correlations~\cite{shahid2012quarter, lai2022identifying, rezvani2013robust} and topological features~\cite{gu2020malicious}, but they lack theoretical guarantees. Our method's advantage is its formulation as an inverse modeling problem, ensuring convergence based on a suitable choice of the step-size parameter. Additionally, there are fully distributed approaches for malicious agent detection based on consensus constructions~\cite{gu2020malicious}, where agents store their neighbors' signal history and exclude suspicious nodes from communication. In social learning, a similar algorithm~\cite{10124228} adapts the initial graph topology based on each agent's detected true state, involving additional computational efforts. In comparison, our method maintains the original topology, preserving the network structure while effectively identifying malicious agents without altering it.}

\section{Social Learning Model}\label{sec:model}
    A set of agents $\mathcal{N}$ builds confidences on each hypothesis $\theta$ from a finite set $\Theta$ through interactions with the environment and among the agents. 
    The agents communicate according to a fixed combination matrix $A \in \mathcal [0,1]^{\mathcal N \times \mathcal N}$, where each nonzero element $a_{\ell, k} > 0$ indicates a directed edge from agent $\ell$ to agent $k$ and defines the level of trust that agent $k$ gives to information arriving from agent $\ell$. 
    Each agent $k$ assigns a total confidence level of $1$ to its neighbors. This assumption makes the combination matrix $A$ left stochastic, i.e., 
    \begin{align}
        \sum_{\ell \in \mathcal N} a_{\ell k} = 1,\;\forall k \in \mathcal N
    \end{align}
    Another common assumption, ensuring global truth learning for homogeneous environments, is that $A$ is strongly connected. This implies the existence of at least one self-loop with a positive weight and a path with positive weights between any two nodes~\cite{Sayed_2014}.
    This condition allows us to apply the Perron-Frobenius theorem~\cite[Chapter 8]{horn2009},~\cite{sayed_2023}, which ensures that the power matrix $A^s$ converges exponentially to $u\mathds 1^\bT$ as $s \rightarrow \infty$. Here, $\mathds{1}$ is the vector of all 1s and $u$ is the Perron eigenvector of $A$ associated with the eigenvalue at $1$ and is normalized as follows:
    \begin{align}
        A u = u,\qquad u_\ell > 0,\qquad \textstyle\sum_{\ell \in \mathcal N} u_\ell = 1.
    \end{align}

    Each agent assigns an initial \textit{private} belief $\bmu_{k,0}(\theta)\in[0,1]$ to each hypothesis $\theta\in\Theta$, forming a probability mass function with the total confidence summing up to $1$, i.e., $\sum_{\theta}\bmu_{k,0}(\theta) = 1$.
    {\color{black}To avoid excluding any hypothesis initially, we assume $\bmu_{k,0}(\theta) > 0$ for all $\theta$.}
    Subsequently, agents iteratively update their belief vectors by interacting both with the environment and with their neighbors.
    At each time instance $i$, agent $k$ receives an observation from the environment conditioned on its true state, denoted by $\bzeta_{k,i} \sim L_k(\zeta|\theta_k^\star)$ or $L_k(\theta_k^\star)$ for brevity.
    In this notation, the observation $\bzeta_{k,i}$ arises from the likelihood model $L_k(\zeta|\theta_k^{\star})$, which is parameterized by the unknown model $\theta_k^{\star}$. 
    For example, the entire network may be following the same and unique model $\theta^\star$, while a few malicious agents may be following some other model $\theta\neq\theta^\star$.
    The observations $\{\bzeta_{k,i}\}$ are assumed to be  independent and identically distributed (i.i.d.) over time. 
    The local Bayesian update performed by agent $k$ at time $i$ takes the following form~\cite{bordignon2020adaptive}:
    \begin{align}
        \bpsi_{k,i}(\theta) = \frac{L_k^\delta(\bzeta_{k,i}\mid\theta)\bmu^{1-\delta}_{k,i-1}(\theta)}{\sum_{\theta'\in\Theta}L^\delta_k(\bzeta_{k,i}\mid\theta')\bmu^{1-\delta}_{k,i-1}(\theta')},\quad \forall k\in\mathcal{N}, \label{eq:adapt_adaptive}
    \end{align}
    where $\delta \in (0,1)$ plays the role of an adaptation parameter and it controls the importance of the newly received observation relative to the information learned from past interactions. 
    The denominator in~(\ref{eq:adapt_adaptive}) serves as a normalization factor, ensuring that the resulting $\bpsi_{k,i}$ is a probability mass function. 
    We refer to $\bpsi_{k,i}$ as the \textit{public} (or intermediate) belief due to the next communication step, which involves a geometric averaging computation~\cite{nedic2017fast, lalitha2018social, zhao2012learning}:
    \begin{align}
        &\bmu_{k,i}(\theta)=\frac{\prod_{\ell\in\mathcal{N}_k}\bpsi^{a_{\ell k}}_{\ell,i}(\theta)}{\sum_{\theta'\in\Theta}\prod_{\ell\in\mathcal{N}_k}\bpsi^{a_{\ell k}}_{\ell,i}(\theta')}, \quad \forall k\in\mathcal{N}. \label{eq:combine}
    \end{align}

    At each iteration $i$, each agent $k$ estimates its true state $\theta_k^\star$ based on the belief vector (either private or public) by selecting the hypothesis with the highest confidence:
    \begin{align}
        \widehat{\btheta}_{k,i} \triangleq \arg\max_{\theta\in\Theta}\bmu_{k,i}(\theta).
        \label{eq:truestate_est0}
    \end{align}
    In the homogeneous environment case~\cite{nedic2017fast, lalitha2018social, zhao2012learning, bordignon2020adaptive}, i.e., when $\theta_k^\star=\theta^\star$ for each $k$, it can be proved that every agent finds the truth asymptotically with probability $1$. 
    
    The work~\cite{shumovskaia2023community} considers nonhomogeneous environments with community-structured graphs; it establishes that, as $\delta \to 0$, the entire network converges to {\textit{one}} solution, while in contrast, a larger $\delta$ activates the mechanism of \textit{local} adaptivity.
    While this property works well with community-structured graphs, some \textit{sparsely} located malicious agents might be heavily influenced by their neighbors or require too large $\delta$. The method we derive estimates the true state of each agent in an inverse manner, allowing it to operate effectively with graphs of general structure and with any $\delta$.

\section{Inverse Modeling}\label{sec:alg}
    In this section, we explain how we can identify malicious agents (or the true state $\theta_k^\star$ for each agent) by observing sequences of public beliefs. 
    Importantly, we will not assume knowledge of the combination matrix $A$.
    
    To begin with, we introduce the following common assumption, essentially requiring the observations to share the same support region~\cite{bordignon2022partial, shumovskaia2022explainability, shumovskaia2023discovering}.
    \begin{Asm}[\bf{Bounded likelihoods}]
        \label{asm:support}
        There exists a finite constant $b > 0$ such that for all $k \in \mathcal N$:
        \begin{align}
            \Bigg|\log \frac {L_k(\boldsymbol\zeta \mid \theta)}{L_k(\boldsymbol\zeta \mid \theta')} \Bigg| \leq b
        \end{align}
        for all $\theta,\;\theta' \in \Theta$ and $\boldsymbol\zeta$.
        \qedsymb
    \end{Asm}

    \noindent Now, consider a \textit{sequence} of public beliefs measured closer to the steady state:
    \begin{align}
        \{\bpsi_{k,i}\}_{i\gg 1},\;k\in\mathcal N
        \label{eq:problem}
    \end{align}
    
    \noindent When an agent cannot distinguish between $\theta^\star_k$ and another $\theta$ due to $L_k(\theta_k^\star) = L_k(\theta)$, we will treat this $\theta$ as a valid model for the agent as well.
    To accommodate this possibility, we define $\Theta_k^{\star}$ as the optimal hypotheses subset for each individual agent, denoted by $\Theta_k^\star = \{\theta_k^\star\} \cup \{\theta\neq\theta_k^\star \mid L_k(\theta) = L_k(\theta_k^\star)\}$.
    Then, we reformulate the problem by stating that our aim is to recover the optimal hypotheses subset for each agent:
    \begin{align}
        \{\Theta_k^\star\}, \;k\in\mathcal N.
    \end{align}

    We denote the level of informativeness of any pair of hypotheses $\theta,\theta'\in\Theta$ at each agent $k$ by:
    \begin{align}
        d_k(\theta,\theta') \triangleq \mathbb E_{\bzeta_k \sim L_k(\theta_k^\star)} \log \frac{L_k(\bzeta_k|\theta)}{L_k(\bzeta_k|\theta')}
        \label{eq:Ldist_estimation}
    \end{align}
    It is clear that this value is equal to zero if both $\theta$ and $\theta'$ belong to the optimal subset $\Theta_k^\star$. Additionally, $d_k(\theta_k^\star, \theta)$ will be positive for any $\theta\notin \Theta^\star_k$ since
    \begin{align}\label{eq:prop1}
        d_k(\theta_k^\star,\theta) = D_{\textup{KL}}\left(L_k\left(\theta_k^\star\right) \mid\mid L_k\left(\theta\right)\right) > 0 
    \end{align}
    and, in turn, $d_k(\theta, \theta_k^\star)$ is always negative:
    \begin{align}\label{eq:prop2}
        d_k(\theta,\theta_k^\star) = - D_{\textup{KL}}\left(L_k\left(\theta_k^\star\right) \mid\mid L_k\left(\theta\right)\right) < 0
    \end{align}
    Here, $D_{\textup{KL}}$ denotes the Kullback-Leibler divergence between two distributions:
    \begin{align}
        D_{\textrm{KL}} \big(L_k(\theta^\star) \mid\mid L_k(\theta)\big) \triangleq \bE_{\bzeta \sim L_k(\bzeta \mid \theta^\star)} \log \frac{L_k(\bzeta\mid \theta^\star)}{L_k(\bzeta\mid\theta)}
    \end{align}
    Properties (\ref{eq:prop1})--(\ref{eq:prop2}) allow us to conclude that the optimal hypotheses subset $\Theta_k^\star$ consists of all $\theta$ for which:
    \begin{align}
        \Theta_k^\star = \{\theta \colon d_k(\theta, \theta') \geq 0,\;\ \forall \theta'\in\Theta \}
        \label{eq:Theta_star_redef}
    \end{align}
    Our aim is to develop an algorithm that learns $\Theta_k^\star$ based on the available information~(\ref{eq:problem}).

    In~\cite[Appendix A]{shumovskaia2022explainability}, it is shown that the adaptive social learning iterations (\ref{eq:adapt_adaptive})--(\ref{eq:combine}) can be expressed in the following compact linear form:
    \begin{align}
        \blambda_i = (1-\delta)A^\bT \blambda_{i-1} + \delta \bL_i
        \label{eq:recursion}
    \end{align}
    where $\blambda_i$ and $\bL_i$ are matrices of size $|\mathcal N| \times (|\Theta| - 1)$, and for each $k$ and $j$, their entries take the log-ratio form:
    \begin{align}
        &[\boldsymbol{\Lambda}_{i}]_{k,j} \triangleq\log\frac{\boldsymbol{\psi}_{k,i}(\theta_0)}{\boldsymbol{\psi}_{k,i}(\theta_j)},
        \; \;[\boldsymbol{\mathcal{L}}_{i}]_{k,j}\triangleq\log\frac{L_k(\boldsymbol{\zeta}_{k,i}\mid\theta_0)}{L_k(\boldsymbol{\zeta}_{k,i}\mid\theta_j)}.
        \label{eq:loglikelihood}
    \end{align}
    for any ordering $\Theta = \{\theta_0,\dots,\theta_{|\Theta|-1}\}$. The expectation of $\bL_i$, relative to the observations $\{\bzeta_{k,i}\}_k$, is given by:
    \allowdisplaybreaks[0]
    \begin{align}
        [\overline \cL]_{k,j} \triangleq [\bE \bL_i]_{k,j} =\textrm{ }&  D_{\textup{KL}}\left(L_k\left(\theta_k^\star\right) \mid\mid L_k\left(\theta_j\right)\right) \nonumber\\
        &- D_{\textup{KL}}\left(L_k\left(\theta_k^\star\right) \mid\mid L_k\left(\theta_0\right)\right),
        \label{eq:L_exp}
    \end{align}
    \allowdisplaybreaks
    and it allows us to rewrite~(\ref{eq:Ldist_estimation}) in a slightly different manner:
    \allowdisplaybreaks[0]
    \begin{align}
        d_k(\theta_{j_1},\theta_{j_2}) = [\overline \cL]_{k,j_2} - [\overline \cL]_{k,j_1}
        \label{eq:Ldist_estimation3}
    \end{align}
    \allowdisplaybreaks
    Furthermore, it is shown in  \cite{shumovskaia2023discovering} that we can estimate $\overline \cL$ by utilizing the publicly exchanged beliefs with the following accuracy~\cite[Theorem 2]{shumovskaia2023discovering}:
    \begin{align}
        &\;\limsup_{i\rightarrow\infty} \bE \|\widehat \bL_{i} - \overline \cL\|_{\textrm F}^2 \nonumber\\
        &\;\leq \frac 1M \textup{Tr} \left(\R_{\bL}\right) + O(\mu/\delta^2) + O\left( 1 / \delta^5 M^2 \right)
        \label{eq:prev_result}
    \end{align}
    where $\mu$ is a small positive learning rate for a stochastic gradient implementation, $M$ is a batch size of data used to compute the estimate $\widehat{\bL}_i$, and $R_{\bL} \triangleq \bE \left(\bL_i - \overline \cL \right) \left(\bL_i - \overline \cL \right)^\bT$.
    Thus, the informativeness~(\ref{eq:Ldist_estimation3}) can be estimated by using
    \begin{align}
        \widehat {\boldsymbol d}_k(\theta_{j_1},\theta_{j_2}) = [\widehat \bL]_{k,j_2} - [\widehat \bL]_{k,j_1}
        \label{eq:Ldist_estimation2}
    \end{align}
    where $\widehat \bL$ is the estimate of $\overline \cL$ from the last available iteration.
    Based on (\ref{eq:Theta_star_redef}), we can now identify the optimal hypotheses subset $\Theta_k^\star$ defined in~(\ref{eq:Theta_star_redef}) as follows:
    \begin{align}
        \widehat {\boldsymbol\Theta}_k \triangleq \arg\max_{\theta_{j_1}} \sum_{\theta_{j_2}} \mathbb I \left\{\widehat {\boldsymbol{d}}_k(\theta_{j_1}, \theta_{j_2}) > 0\right\}
    \end{align}
    where $\mathbb I \left\{ x \right\}$ is an indicator function that assumes the value $1$ when its argument is true and is $0$ otherwise. 

    We list the procedure in Algorithm~\ref{alg}, including the part related to estimating~(\ref{eq:prev_result}) by using \cite[Algorithm 1]{shumovskaia2023discovering}.

    \begin{algorithm}
        \KwData{
            At each time $i$: 
                $\left\{ \bpsi_{k,i}(\theta)\right\}_{k\in \mathcal N}$, $\delta$
        }
        \KwResult{Estimated combination matrix $\bA$; \\
        $\;\;\;\;\;\;\;\;\;\;\;\;$Estimated expected log-likelihood ratios 
        $\widehat{\bL}$;\\
        $\;\;\;\;\;\;\;\;\;\;\;$ Estimated set of true states for each agent,  $\widehat {\boldsymbol\Theta}_k$.\\
        }
        initialize $\bA_0$, $\widehat \bL_0$\\
        \Repeat{sufficient convergence}{
            Compute matrices $\blambda_i$:\\
            \For{$k\in\mathcal N$, $j=1,\dots,|\Theta|$}{
                \begin{flalign*}
                    &[\boldsymbol{\Lambda}_{i}]_{k,j} = \log\left(\boldsymbol{\psi}_{k,i}(\theta_0) / \boldsymbol{\psi}_{k,i}(\theta_j) \right)&&
                \end{flalign*}
            }
            \noindent Combination matrix update~\cite{shumovskaia2023discovering}:
            \begin{flalign*}
                &\boldsymbol{A}_i =  \boldsymbol{A}_{i-1} + \mu(1-\delta)\left(\blambda_{i-1} - M^{-1} \textstyle\sum_{j=i-M}^{i-1} \blambda_{j-1}\right) \nonumber&&\\
                &\textrm{ }\textrm{ }\textrm{ }\textrm{ }\textrm{ }\textrm{ }\times\left(\blambda_i^{\mathsf{T}} -  (1-\delta)\blambda_{i-1}^{\mathsf{T}}\boldsymbol{A}_{i-1} - \delta \widehat \bL_{i-1}^\bT \right).&&
            \end{flalign*}
            \noindent Log-likelihoods matrix update:
            \begin{flalign*}
                &\widehat \bL_i = \delta^{-1} M^{-1} \textstyle\sum_{j=i-M+1}^{i}\left(\blambda_j - (1-\delta)\bA_{i}^\bT \blambda_{j-1}\right)&&
            \end{flalign*}
            \begin{flalign*}
                &i = i+1&&
            \end{flalign*}
        }
        \noindent $\;$ Informativeness estimate for all agents $k \in \mathcal N$ and pairs of hypotheses $\theta_{j_1},\;\theta_{j_2}\in\Theta$:
        \begin{flalign*}
            &\widehat {\boldsymbol d}_k(\theta_{j_1},\theta_{j_2}) =   [\widehat   \bL_i]_{k,j_2} - [\widehat \bL_i]_{k,j_1}&&
        \end{flalign*}
        \noindent Optimal hypotheses set estimate for all agents $k\in\mathcal N$:
        \begin{flalign*}
            &\widehat {\boldsymbol\Theta}_k \triangleq \arg\max_{\theta_{j_1}} \textstyle\sum_{\theta_{j_2}} \mathbb I \left\{\widehat {\boldsymbol{d}}_k(\theta_{j_1}, \theta_{j_2}) > 0\right\}&&
        \end{flalign*}
        \caption{Inverse learning of heterogeneous states}
        \label{alg}
    \end{algorithm}

    The following result establishes the probability of error.
    \begin{Thm}[\bf{Probability of error}]\label{thm:error}
        The probability of choosing a wrong hypothesis $\theta \notin \Theta_k^\star$ for agent $k\in\mathcal N$ is upper bounded by:
        \begin{align}
            \mathbb P \left\{\theta \in \widehat {\boldsymbol\Theta}_k\right\} \leq &\; \frac 4M \textup{Tr} \left(\R_{\bL}\right) \sum_{\theta^\star \in \Theta_k^\star} D^{-1}_{\textup{KL}}\big(L_k\left(\theta^\star\right) \mid\mid L_k\left(\theta\right)\big) \nonumber\\
            &+ O(\mu/\delta^2) + O\left( 1 / \delta^5 M^2 \right)
        \end{align}
    \end{Thm}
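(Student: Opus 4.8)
The plan is to reduce the event ``$\theta$ is selected for agent $k$'' to a deviation statement about the entries of the steady-state estimate $\widehat{\bL}_i$ in row $k$, and then to control that statement by a tail inequality fed by the estimation guarantee~(\ref{eq:prev_result}). The conceptual crux is a \emph{deterministic} observation. From~(\ref{eq:Ldist_estimation2}), $\widehat{\boldsymbol d}_k(\theta_{j_1},\theta_{j_2})=[\widehat{\bL}_i]_{k,j_2}-[\widehat{\bL}_i]_{k,j_1}$ (assigning the $\theta_0$-coordinate the value $0$), so the objective $\sum_{\theta_{j_2}}\mathbb I\{\widehat{\boldsymbol d}_k(\theta_{j_1},\theta_{j_2})>0\}$ merely counts how many coordinates of row $k$ of $\widehat{\bL}_i$ lie strictly above $[\widehat{\bL}_i]_{k,j_1}$; hence $\widehat{\boldsymbol\Theta}_k$ is precisely the set of minimizers of $\theta\mapsto[\widehat{\bL}_i]_{k,\theta}$. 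Symmetrically, combining~(\ref{eq:L_exp}), (\ref{eq:prop1})--(\ref{eq:prop2}) and~(\ref{eq:Theta_star_redef}), $\Theta_k^\star$ is the set of minimizers of $\theta\mapsto[\overline\cL]_{k,\theta}$, with strictly positive margin $[\overline\cL]_{k,\theta}-[\overline\cL]_{k,\theta^\star}=d_k(\theta^\star,\theta)=D_{\textup{KL}}\big(L_k(\theta^\star)\mid\mid L_k(\theta)\big)>0$ for all $\theta^\star\in\Theta_k^\star$ and $\theta\notin\Theta_k^\star$. Therefore, if even a single pairwise comparison against $\Theta_k^\star$ is decided correctly, i.e., $\widehat{\boldsymbol d}_k(\theta^\star,\theta)>0$ for some $\theta^\star\in\Theta_k^\star$, then $[\widehat{\bL}_i]_{k,\theta}>[\widehat{\bL}_i]_{k,\theta^\star}$, so $\theta$ is not a minimizer and cannot lie in $\widehat{\boldsymbol\Theta}_k$; equivalently
\begin{align}
    \left\{\theta\in\widehat{\boldsymbol\Theta}_k\right\}\subseteq\bigcap_{\theta^\star\in\Theta_k^\star}\left\{\widehat{\boldsymbol d}_k(\theta^\star,\theta)\le 0\right\}.
    \label{eq:plan_reduction}
\end{align}

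For the probabilistic step, fix $\theta^\star\in\Theta_k^\star$ and write the errors of the two relevant coordinates as $e_\theta\triangleq[\widehat{\bL}_i]_{k,\theta}-[\overline\cL]_{k,\theta}$ and $e_{\theta^\star}\triangleq[\widehat{\bL}_i]_{k,\theta^\star}-[\overline\cL]_{k,\theta^\star}$, so that $\widehat{\boldsymbol d}_k(\theta^\star,\theta)=d_k(\theta^\star,\theta)+e_\theta-e_{\theta^\star}$ by~(\ref{eq:Ldist_estimation2})--(\ref{eq:Ldist_estimation3}). Since $d_k(\theta^\star,\theta)=D_{\textup{KL}}\big(L_k(\theta^\star)\mid\mid L_k(\theta)\big)>0$, the event $\{\widehat{\boldsymbol d}_k(\theta^\star,\theta)\le 0\}$ forces $|e_\theta-e_{\theta^\star}|\ge D_{\textup{KL}}\big(L_k(\theta^\star)\mid\mid L_k(\theta)\big)$, and a tail inequality then bounds its probability by $\bE|e_\theta-e_{\theta^\star}|^2$ scaled by $D^{-1}_{\textup{KL}}\big(L_k(\theta^\star)\mid\mid L_k(\theta)\big)$. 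The numerator is controlled by $\bE|e_\theta-e_{\theta^\star}|^2\le 2\bE e_\theta^2+2\bE e_{\theta^\star}^2\le 4\,\bE\|\widehat{\bL}_i-\overline\cL\|_{\textrm{F}}^2$, since each of $e_\theta^2$ and $e_{\theta^\star}^2$ is a single summand of the Frobenius norm, followed by the steady-state bound~(\ref{eq:prev_result}); this gives $\bE|e_\theta-e_{\theta^\star}|^2\le \tfrac 4M\Tr(\R_{\bL})+O(\mu/\delta^2)+O(1/\delta^5M^2)$. Finally, by~(\ref{eq:plan_reduction}) the probability of the intersection is at most the minimum---hence also at most the sum---of the per-$\theta^\star$ bounds, and collecting the finite cardinality $|\Theta_k^\star|$ together with the lower-order terms into the $O(\cdot)$ notation produces the stated inequality.

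The main obstacle is the reduction~(\ref{eq:plan_reduction}): although $\widehat{\boldsymbol\Theta}_k$ is defined through a sum of indicators over all pairs of hypotheses, and may a priori be a set rather than a singleton, one must argue cleanly that a single correctly-decided comparison against a member of $\Theta_k^\star$ already rules $\theta$ out---the ``argmin'' reformulation is what makes this transparent and also explains why $\theta$ should be compared precisely against the members of $\Theta_k^\star$. The remaining care lies in the probabilistic step: passing from the Frobenius-norm, steady-state (limsup) guarantee~(\ref{eq:prev_result}) to a second-moment bound for the single pair $(\theta^\star,\theta)$, selecting a tail inequality that keeps the dependence on $D_{\textup{KL}}\big(L_k(\theta^\star)\mid\mid L_k(\theta)\big)$ as sharp as in the statement, and checking that the $O(\mu/\delta^2)$ and $O(1/\delta^5M^2)$ perturbations survive these manipulations with the same order.
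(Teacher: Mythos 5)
Your overall architecture coincides with the paper's: reduce the event $\{\theta\in\widehat{\boldsymbol\Theta}_k\}$ to pairwise comparisons $\widehat{\boldsymbol d}_k(\theta^\star,\theta)$ against the members of $\Theta_k^\star$, then bound each comparison's failure probability through the estimation guarantee~(\ref{eq:prev_result}) and sum over $\Theta_k^\star$. Your reduction step is in fact cleaner than the paper's: the observation that $\widehat{\boldsymbol\Theta}_k$ is exactly the set of minimizers of $\theta\mapsto[\widehat{\bL}]_{k,\theta}$ makes the inclusion of $\{\theta\in\widehat{\boldsymbol\Theta}_k\}$ into the intersection of the events $\{\widehat{\boldsymbol d}_k(\theta^\star,\theta)\le 0\}$ transparent, whereas the paper simply asserts the inclusion into the union of $\{\widehat{\boldsymbol d}_k(\theta^\star,\theta)<0\}$ in~(\ref{eq:th_h1}).

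The gap is in the probabilistic step. You claim that a ``tail inequality'' bounds $\mathbb P\{|e_\theta-e_{\theta^\star}|\ge D_{\textup{KL}}(L_k(\theta^\star)\,\|\,L_k(\theta))\}$ by $\bE|e_\theta-e_{\theta^\star}|^2$ scaled by $D^{-1}_{\textup{KL}}$. No such inequality exists in general: Chebyshev (Markov applied to the second moment) yields $D^{-2}_{\textup{KL}}$, while Markov applied to the first moment yields $D^{-1}_{\textup{KL}}$ but with $\bE|e_\theta-e_{\theta^\star}|\le\left(\bE|e_\theta-e_{\theta^\star}|^2\right)^{1/2}$ in the numerator, i.e., a square root on the trace term. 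Neither route reproduces the stated bound $\frac 4M\Tr\left(\R_{\bL}\right)D^{-1}_{\textup{KL}}$: as written, your derivation either changes the exponent on the KL divergence or changes the order in $\Tr\left(\R_{\bL}\right)/M$. (For what it is worth, the paper's own proof crosses this same bridge by asserting in~(\ref{eq:mark}) a first-absolute-moment bound with the same right-hand side as the mean-square guarantee~(\ref{eq:prev_result}) --- itself glossing over a square root --- and then applying Markov's inequality with the threshold halved, factoring the joint event in~(\ref{eq:th_h0}), and Taylor-expanding; the factor $4$ there arises as $2\times 2$ from the halved threshold and the expansion of $(1-x)^2$, not from your $2\bE e_\theta^2+2\bE e_{\theta^\star}^2$ decomposition.) You correctly flagged ``selecting a tail inequality that keeps the dependence on $D_{\textup{KL}}$ as sharp as in the statement'' as the remaining difficulty, but the specific inequality you propose to use is not valid, and this is precisely the step on which the theorem's stated rate hinges.
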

    \begin{proof}
    First, we upper bound the probability using the definition of $d(\cdot,\cdot)$ and its estimate from~(\ref{eq:Ldist_estimation}) and (\ref{eq:Ldist_estimation2}), along with the properties of probability. 
    For any $\theta_j \notin \Theta_k^\star$, we have that:
    \begin{align}
        \mathbb P \left\{\theta_j \in \widehat {\boldsymbol{\Theta}}_k\right\} &\;\leq \mathbb P \left\{\exists \theta_k^\star \in \Theta_k^\star \colon \widehat {\boldsymbol d }_k(\theta_k^\star, \theta_j) < 0 \right\} \nonumber\\
        &\;\leq \sum_{\theta_k^\star \in \Theta_k^\star}\mathbb P \left\{\widehat {\boldsymbol d }_k(\theta_k^\star, \theta_j) < 0 \right\}
        \label{eq:th_h1}
    \end{align}
    Next, we estimate the probability of $\widehat{\boldsymbol d }_k(\theta_k^\star, \theta_j)$ being negative for some fixed $\theta_j$ and $\theta_k^\star$ using~(\ref{eq:Ldist_estimation2}), while denoting $j^\star_k$ as the index of $\theta_k^\star$:
    \begin{align}
        &\mathbb P\left\{\widehat{\boldsymbol d }_k(\theta_k^\star, \theta_j) < 0\right\} = \mathbb P \left\{[\widehat \bL]_{k,j} - [\widehat \bL]_{k, j^\star_k} < 0\right\} \nonumber\\
        &= 1 - \; \mathbb P \Big\{[\widehat \bL]_{k,j^\star_k} - [\overline \cL]_{k,j^\star_k} - \left([\widehat \bL]_{k, j} - [\overline \cL]_{k, j}\right) \nonumber\\
        &\;\;\;\;\;\;\;\;\;\;\;\;\;\;\;\;\;\;\leq [\overline\cL]_{k,j} - [\overline\cL]_{k,j^\star_k}\Big\} \nonumber\\
        &\leq 1 - \mathbb P \Big\{\left|[\widehat \bL]_{k, j^\star_k} - [\overline \cL]_{k, j^\star_k}\right| + \left|[\widehat \bL]_{k, j} - [\overline \cL]_{k, j}\right|\nonumber\\
        &\;\;\;\;\;\;\;\;\;\;\;\;\;\;\;\;\;\; \leq [\overline\cL]_{k,j} - [\overline\cL]_{k, j^\star_k}\Big\} \nonumber\\
        &\leq  1 - \mathbb P \left\{\left|[\widehat \bL]_{k,j^\star_k} - [\overline \cL]_{k, j^\star_k}\right| \leq \left([\overline\cL]_{k,j} - [\overline\cL]_{k,j^\star_k}\right)/2\right\} \nonumber\\
        &\;\;\;\;\;\; \times \mathbb P \left\{\left|[\widehat \bL]_{k, j} - [\overline \cL]_{k, j}\right| \leq\left([\overline\cL]_{k,j} - [\overline\cL]_{k, j^\star_k}\right)/2\right\}
        \label{eq:th_h0}
    \end{align}
    We can transform the result~(\ref{eq:prev_result}) from~\cite[Theorem 2]{shumovskaia2023discovering} into:
    \begin{align}
        \bE \left|[\widehat\bL]_{k,j} - [\overline\cL]_{k,j} \right| \leq \frac 1M \textup{Tr} \left(\R_{\bL}\right) + O(\mu/\delta^2) + O\left( 1 / \delta^5 M^2 \right)
        \label{eq:mark}
    \end{align}
    By Markov's inequality~\cite{sayed_2023}, for any $a > 0$:
    \begin{align}
        &\mathbb P \left(\left|[\widehat\bL]_{k,j} - [\overline\cL]_{k,j} \right| \leq a\right) \nonumber\\
        &\geq  1 - \frac 1{aM} \textup{Tr} \left(\R_{\bL}\right) + O(\mu/\delta^2) + O\left( 1 / \delta^5 M^2 \right)
    \end{align}
    Also, by the definition of KL divergence we have that:
    \begin{align}
        [\overline\cL]_{k,j} - [\overline\cL]_{k, j^\star_k} =&\; D_{\textup{KL}}\left(L_k\left(\theta_k^\star\right) \mid\mid L_k\left(\theta_j\right)\right) > 0.
    \end{align}
    Thus,~(\ref{eq:th_h0}) can be upper bounded by:
    \begin{align}
        &\mathbb P\left\{\widehat{\boldsymbol d }_k(\theta_k^\star, \theta_j) < 0\right\} \nonumber\\
        &\leq 1 - \bigg(1 - \frac{\frac 2M \textup{Tr} \left(\R_{\bL}\right) + O(\mu/\delta^2) + O\left( 1 / \delta^5 M^2 \right)}{[\overline\cL]_{k,j} - [\overline\cL]_{k, j^\star_k}}\bigg)^2 \nonumber\\
        & \approx 4 M^{-1} \textup{Tr} \left(\R_{\bL}\right) D^{-1}_{\textup{KL}}\big(L_k\left(\theta_k^\star\right) \mid\mid L_k\left(\theta\right)\big) \nonumber\\
        &\;\;\;\;+ O(\mu/\delta^2) + O\left( 1 / \delta^5 M^2 \right)
        \label{eq:tay}
    \end{align}
    using the Taylor's expansion for any small $x$, namely, $(1+x)^2 = 1 + 2x + O(x^2)$.

    Combining~(\ref{eq:th_h1}) with~(\ref{eq:tay}) we get the desired statement.
    \end{proof}

    {\color{black}The model's performance is influenced by parameters $\delta$ and $\mu$, with $\mu$ being arbitrarily small. As shown in~\cite{shumovskaia2023community}, when $\delta$ is close to 1, agents rely more on their own observations, making it easier to reveal their true state in social learning. This aligns with the derived result.}

\section{Computer Experiments}

    In this section, we consider the image dataset MIRO (Multi-view Images of Rotated Objects)~\cite{kanezaki2018rotationnet}, which contains objects of different classes from different points of view -- see Fig.~\ref{fig:miro_example}. 
    For each class, there are 10 objects, and each of the objects has 160 different perspectives. 

    \begin{figure}
        \centering
        \includegraphics[width=\linewidth]{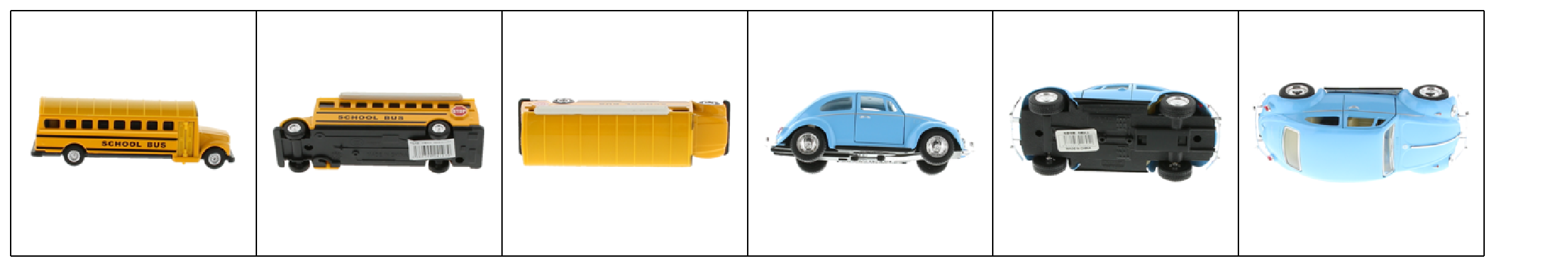}
        \caption{Example of images from the MIRO dataset for classes ``bus" and ``car".}
        \label{fig:miro_example}
    \end{figure}
    
    A network of agents wishes to solve a binary hypotheses problem to distinguish between states $\theta_0$ corresponding to the class ``bus" and $\theta_1$ corresponding to the class ``car". 
    Each agent has its own convolutional neural network (CNN) classifier. 
    These CNNs are trained to distinguish classes $\theta_0$ and $\theta_1$ by observing only a part of the image, similar to the approach in~\cite{hu2023non,bordignon2022partial}. Each image measures $224 \times 224$ pixels, and each agent observes a section of size $112 \times 112$ pixels, situated in different regions of the image. We illustrate the observation map in Fig.~\ref{fig:miro_train}. 
    The CNN architecture consists of three convolutional layers: 6 output channels, $3 \times 3$ kernel, followed by ReLU and $2 \times 2$ max pooling; 16 channels, $3 \times 3$ kernel, ReLU, and $2 \times 2$ max pooling; 32 channels, $3 \times 3$ kernel, ReLU, and $2 \times 1$ max pooling. This is followed by linear layers of sizes $288 \times 64$, $64 \times 32$, and $32 \times 2$, with ReLU activation function in between. The final prediction layer is log softmax. Training involves 100 epochs with a learning rate of $0.0001$ and negative log-likelihood loss.

    For generating a combination matrix (see Fig.~\ref{fig:miro_train}), we initially sample an adjacency matrix following the Erdos-Renyi model with a connection probability of $0.2$. Subsequently, we set the combination weights using the averaging rule~\cite[Chapter 14]{Sayed_2014}. During the inference, we let the central agent be malicious -- see Fig.~\ref{fig:miro_test}.

    Since we only have 10 objects of each class, having only a handful of objects as a test subset is not enough to provide a reliable accuracy metric. 
    Thus, we perform a cross-validation procedure where at first, we train the CNNs on 9 objects from each class, leaving 1 object from each class for testing purposes.
    On average, the cross-validation accuracy of standalone classifiers is \textbf{0.68}. 
    The value is relatively low due to a small training set and limited observation available at each agent.
    Given that many folds had some classifiers with an accuracy below 0.5, we decided to retain only those folds where each agent achieved at least 0.5 accuracy.
    As a result, we are left with 72 folds instead of 100 with the mean accuracy of standalone classifiers equal to \textbf{0.81}.
    
    We apply the adaptive social learning strategy with $\delta=0.1$ over 480 iterations, showing each frame 3 times on average. The network observes a ``bus" while the central agents observe a ``car" (Fig.\ref{fig:miro_test}).
    We can see that despite the presence of the malicious agent, the average belief of each agent tends towards the correct hypothesis $\theta_0$ (see Fig.~\ref{fig:miro_asl}) with the mean accuracy {\textbf{0.8}}. 
    However, as depicted in Fig.~\ref{fig:miro_accuracy}, the algorithm is able to identify the malicious agent achieving the mean accuracy {\textbf{0.99}}.

    \begin{figure}
        \centering
        \begin{subfigure}[b]{0.22\textwidth}
            \centering
            \caption{Training scheme.}
            \includegraphics[width=\linewidth]{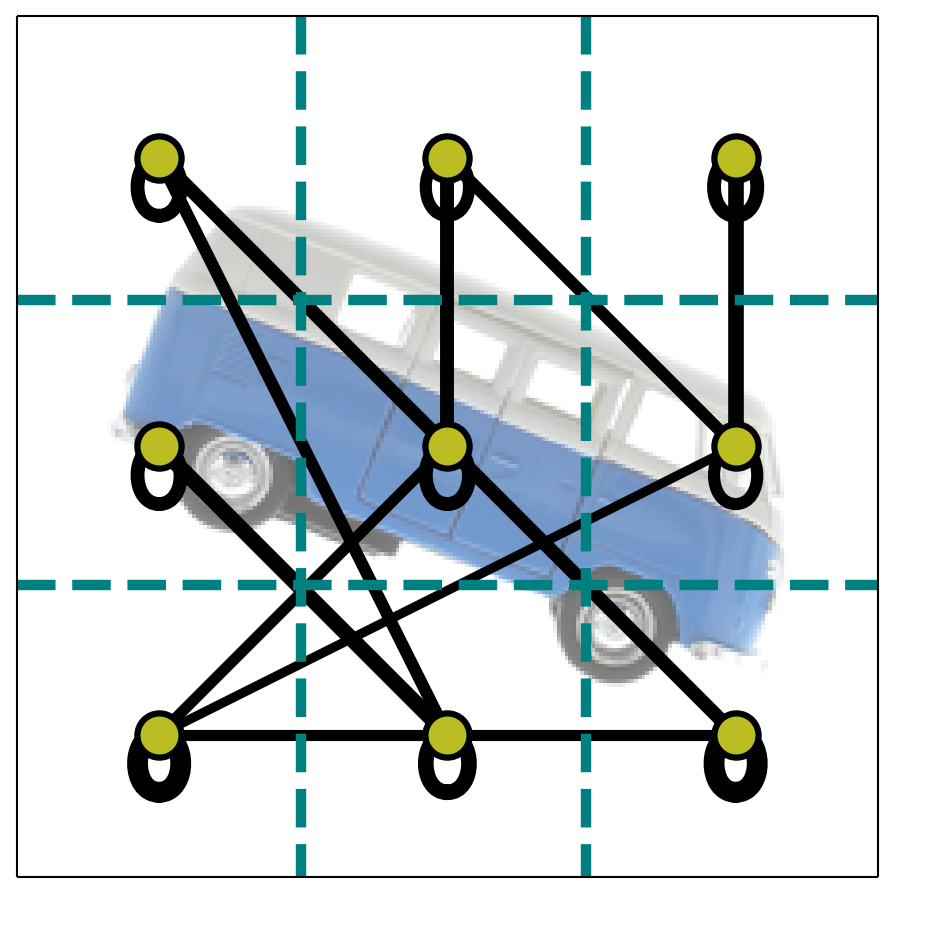}
            \label{fig:miro_train}
        \end{subfigure}
        \hfill
        \begin{subfigure}[b]{0.22\textwidth}
            \centering
            \caption{Test scheme with the central node being malicious.}

            \includegraphics[width=\linewidth]{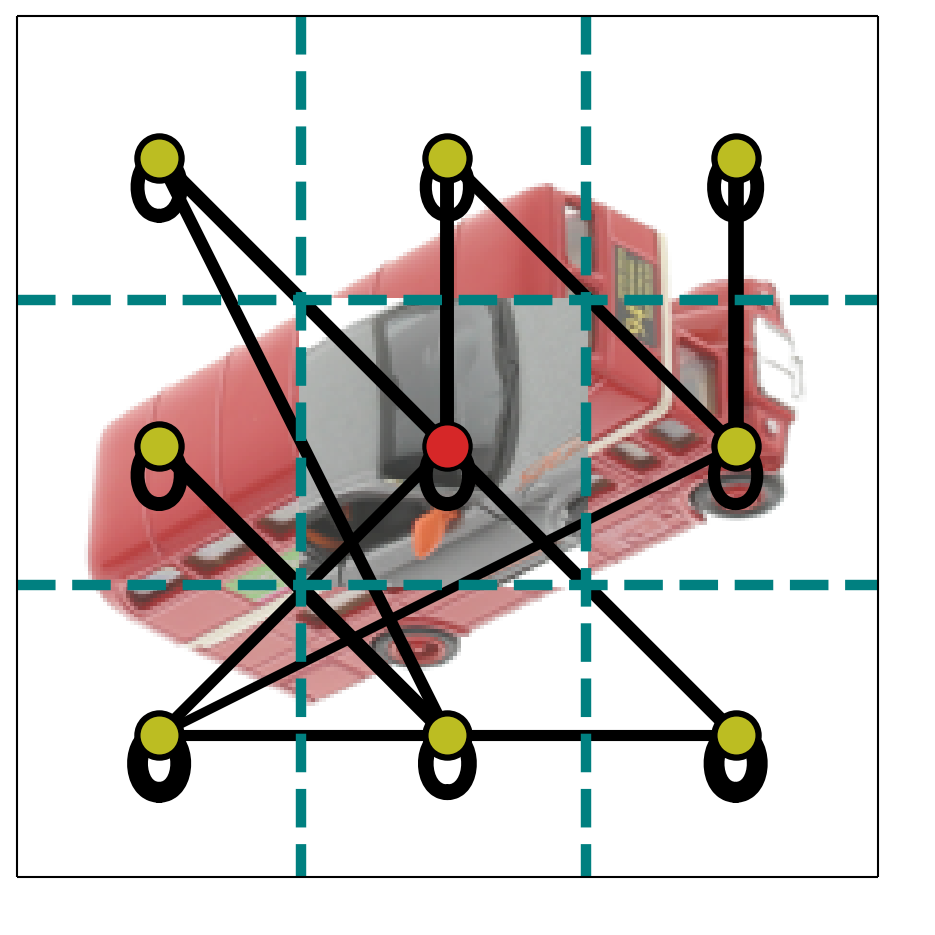}
            \label{fig:miro_test}
        \end{subfigure}
        
        \caption{Observation map of each agent.}
        \label{fig:graph_train}
    \end{figure}

    \begin{figure}
        \centering
        \begin{subfigure}[b]{0.22\textwidth}
            \centering
            \caption{Accuracy of the social learning strategy to predict $\theta_0$.}
            \includegraphics[width=\linewidth]{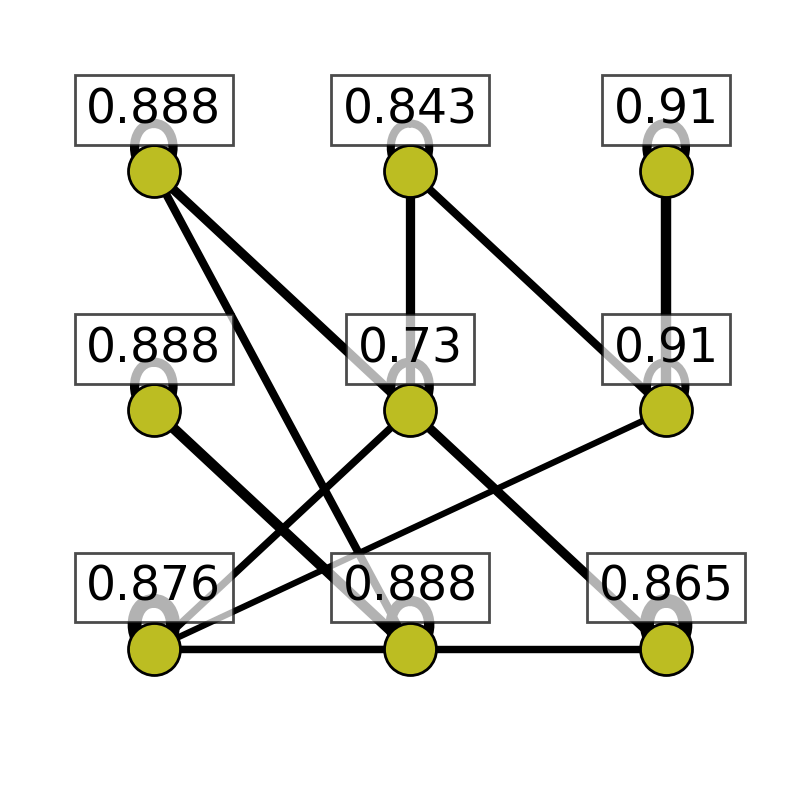}
            \label{fig:miro_asl}
        \end{subfigure}
        \hfill
        \begin{subfigure}[b]{0.22\textwidth}
            \centering
            \caption{Malicious detection accuracy and learned graph.}
            \includegraphics[width=\linewidth]{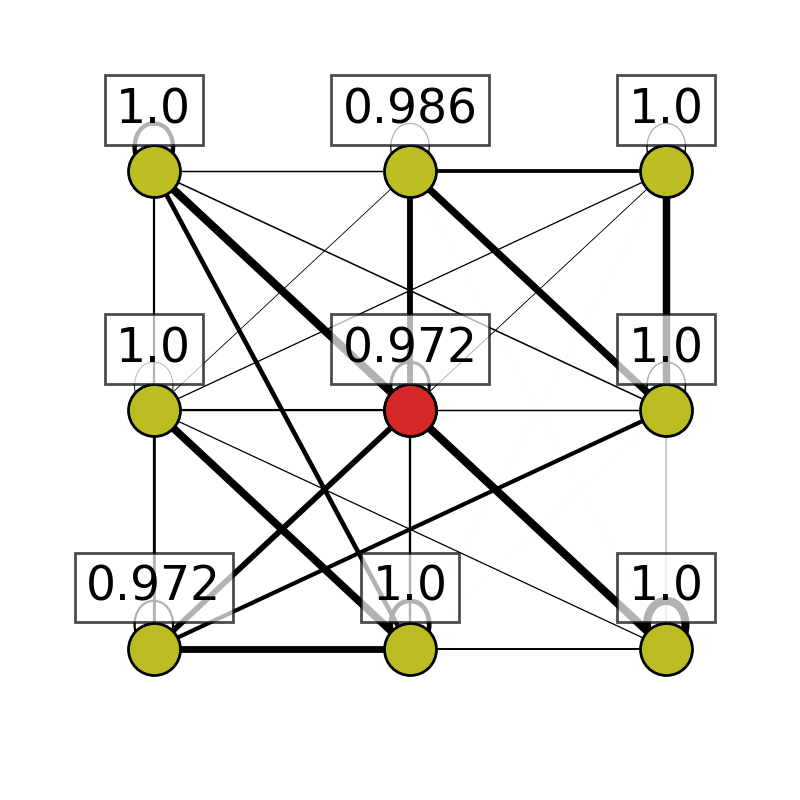}
            \label{fig:miro_accuracy}
        \end{subfigure}
        \caption{Accuracy of the adaptive social learning strategy~\cite{bordignon2020adaptive} and Algorithm~\ref{alg}. {\color{black}Yellow represents $\theta_0$, and red represents $\theta_1$.} For each fold, social learning accuracy is averaged over the past 100 iterations.}
        \label{fig:miro_acc}
    \end{figure}

 \newpage 
\bibliographystyle{IEEEtran}
\bibliography{references}

\end{document}